\newtheorem{theorem}{Theorem}[section]
\newtheorem{remark}{Remark}[section]
\newtheorem{lemma}{Lemma}[section]
\newenvironment{proof}[1][Proof]{\noindent\emph{{#1.}} }{ \rule{0.5em}{0.5em}}
\begin{document}
\begin{frontmatter}
\title{Blow-up of solution of an initial boundary value problem for a generalized Camassa-Holm equation}
\author{Jiangbo Zhou\corauthref{cor}},
\corauth[cor]{Corresponding author. Tel.: +86-511-88969336; Fax:
+86-511-88969336.} \ead{zhoujiangbo@yahoo.cn}
\author{Lixin Tian}
\address{Nonlinear Scientific Research Center, Faculty of Science, Jiangsu
University, Zhenjiang, Jiangsu 212013, China}
\begin{abstract} In this paper, we study the following initial
boundary value problem for a generalized Camassa-Holm equation
\[
\left\{ {{\begin{array}{*{20}c}
 {u_t - u_{xxt} + 3uu_x - 2u_x u_{xx} - uu_{xxx} + k(u - u_{xx} )_x=
0,  t\geq0,\;x \in [0,\;1],\;\;} \hfill \\
 {u(0,t) =  u(1,t)=u_x (0,t) = u_x (1,t)=0, t\geq0,}\hfill \\
 {u(0,\;x)= u_0 (x), x \in [0,\;1],\;}\hfill \\
 \end{array} }} \right.
\]
Where $k $ is a real constant. We establish local well-posedness of
this closed-loop system by using Kato's theorem for abstract
quasilinear evolution equation of hyperbolic type. Then, by using
multiplier technique, we obtain a conservation law which enable us
to present a blow-up result.\\
\end{abstract}

\begin{keyword}
 Generalized Camassa-Holm equation; Initial boundary value problem;
Blow up
\MSC 35G25; 35G30; 35L05
\end{keyword}

\end{frontmatter}
\section{Introduction}
 \setcounter {equation}{0}Recently, Camassa and Holm \cite{1} derived a
nonlinear dispersive shallow water wave equation
\begin{equation}
\label{eq1.1} u_t - u_{xxt} + 3uu_x = 2u_x u_{xx} + uu_{xxx}
\end{equation}
\noindent which is called Camassa-Holm  equation. Here $u(x,t)$
denotes the fluid velocity at time $t$ in the $x$ direction or,
equivalently, the height of the water's free surface above a flat
bottom. Eq.(\ref{eq1.1}) has a bi-Hamiltonian structure  \cite{2,3}
and is completely integrable  \cite{1,4}. It admits, in addition to
smooth waves, a multitude of travelling wave solutions with
singularities: peakons, cuspons, stumpons and composite waves
\cite{1,5,6}. Its solitary waves are stable solitons \cite{7,8},
retaining their shape and form after interactions \cite{9}. It
models wave breaking \cite{10,11,12}.

The Cauchy problem for the Camassa-Holm equation has been studied
extensively. It has been proved to be locally well-posed
\cite{12,13} for initial data $u_0 \in H^S(R)$ with $S >
\textstyle{3 \over 2}$. Moreover, it has strong solutions that are
global in time \cite{14,15} as well as solutions that blow up in
finite time \cite{14,16,17,18}. On the other hand, it has global
weak solutions with initial data $u_0 \in H^1$ \cite{14,19,20}.

The initial boundary value problem for the Camassa-Holm equation was
also studied by several authors. For example, Kwek etc. \cite{21}
obtained the local existence and blow-up for an initial $^{ }$
boundary value problem for the Camassa-Holm equation with the
homogeneous boundary conditions: $u(0,t) = u_{xx}(0,t) = u(1,t) =
u_{xx}(1,t) = 0$ on interval $[0,1]$. Ma and Ding \cite{22} obtained
the existence and uniqueness of the local strong solutions to an
initial boundary problem for the Camassa-Holm equation on half axis
$R^ + $ with initial data $u_0 \in H^2(R^ + ) \cap H_0^1 (R^ + )$.
They also established the global result of the corresponding
solution, provided that the initial data $u_0 $ satisfies certain
positivity condition.

In this Letter we are interested in an initial boundary value
problem for the following equation
\begin{equation}
\label{eq1.2} u_t - u_{xxt} + 3uu_x - 2u_x u_{xx} - uu_{xxx} + k(u -
u_{xx} )_x = 0,
\end{equation}
where $k$ is a real constant, $ku_x$ denotes the dissipative term
and $ku_{xxx}$ denotes the dispersive effect. When $k = 0$,
Eq.(\ref{eq1.2}) is the well known Camassa-Holm equation. The
initial boundary value problem for Eq.(\ref{eq1.2}) we intend to
investigate is
\begin{equation}
\label{eq1.3} \left\{ {{\begin{array}{*{20}c}
 {u_t - u_{xxt} + 3uu_x - 2u_x u_{xx} - uu_{xxx} + k(u - u_{xx})_x=
0,  t\geq0,\;x \in [0,\;1],\;\;} \hfill \\
 {u(0,t) = u(1,t)=u_x (0,t) = u_x (1,t)=0, t\geq0,}\hfill \\
 {u(0,\;x) = u_0 (x), x \in [0,\;1],\;}\hfill \\
 \end{array} }} \right.
\end{equation}

The remainder of the paper is organized as follows. In Section 2, we
establish the local well-posedness for the closed-loop system
(\ref{eq1.3}) by Kato's theorem \cite{23}. In Section 3, by using
multiplier technique, we obtain a conservation law of the
closed-loop system (\ref{eq1.3}). Using this conservation law we
present a blow-up result.

We will use the following notation without further comment. $ * $
for convolution; $L(Y,X)$ for all bounded linear operator from
Banach space $Y$ to $X(L(X)$ if $X = Y)$; $\partial _x = \partial /
\partial x$; $\Lambda = (1 - \partial _x^2 )^{{1 \over
2}}$; $H^S $ is the usual Sobolev spaces, with the norm $\left\| {\;
\cdot \;} \right\|_{H^S} = \left\| {\; \cdot \;} \right\|_S $ and
the inner product $(\; \cdot \;)_S$; $L^2 = L^2(0,1)$ with the norm
$\left\| {\; \cdot \;} \right\|_0 $ and the inner product $(
\;\cdot\; )_0$; $H_{0,1}^S = \{u(x) \in H^S(0,1):\;u(0) =
u(1)=u_x(0) = u_x(1)=0\}$ ; $[A,\;B]=AB-BA$ denotes the commutator
of the linear operators $A$ and $B$; $C^k(I;\;X)$for the space of
all $k$ times continuously differentiable functions defined on an
interval $I$ with values in Banach space $X$.

\section{Local well-posedness}
 \setcounter {equation}{0}In this section , we
will apply Kato's theorem \cite{23} to esstablish the local
well-posedness for the closed-loop system (\ref{eq1.3}). For
convenience, we state Kato's theorem in the form suitable for our
purpose.

Consider the Cauchy problem associated to a quasilinear evolution
equation
\begin{equation}
\label{eq2.1} \quad \left\{ {\begin{array}{l}
 \frac{du}{dt} + A(u)u = f(u) \in X,\quad t \ge 0 \\
 u(0) = u_0 \in Y \\
 \end{array}} \right.
\end{equation}
\noindent where $A(u)$ is a linear operator depending on the unknown
$u$, and $u_0 $ the initial value. To study the Cauchy problem
(local in the time) associated to (\ref{eq2.1}), we will make the
following assumptions:

(X) $X$ and $Y$are reflexive Banach spaces where $Y \subset X$, with
the inclusion continuous and dense, and there is an isomorphism $Q$
from $Y$onto $X$.

($A_1$) Let $W$ be an open ball centered in 0 and contained in $Y$.
The linear operator $A(u)$ belongs to $G(X,1,\beta)$ where $\beta$
is a real number, i.e., $ - A(u)$ generates a $C_0$-semigroup such
that
\[
\left\| {e^{ - sA(u)}} \right\|_{L(X)} \le e^{\beta s}.
\]
Note that if $X$ is a Hilbert space, then $A \in G(X,1,\beta )$ if
and only if \cite{24}

(a) $(A\phi ,\phi )_X \ge - \beta \left\| \phi \right\|_X^2
,\;\forall \phi \in D(A)$,

(b) $(A + \lambda I )$is onto for some (all) $\lambda > \beta $.

Under these conditions $A(u)$ is said to be quasi-m-accretive.

($A_2$) The map $w \in W \to B(w) = [Q,\;A(w)]Q^{ - 1} \in L(X)$ is
uniformly bounded and Lipschitz continuous, that is, there exist
constants $\lambda _1 , \mu _1 > 0$, such that for all $w,\,y \in
W$,
\[
\left\| {B(w)} \right\|_{L(X)} \le \lambda _1 ,
\]
\[
\left\| {B(w)-B(y)} \right\|_{L(X)} \le \mu_1\|w-y\|_Y,
\]

($A_3$) $X \subseteq D(A(w))$ for each $w \in W$ (so that
$A(w)\left| {_Y \in L(Y,X)} \right.$by the Closed Graph theorem).
Moreover, the map $w \in W \to A(w) \in L(Y,X)$ satisfies the
following Lipschitz condition:
\[
\left\| {A(w) - A(y)} \right\|_{L(Y,X)} \le \mu _2 \left\| {w - y}
\right\|_X
\]
\noindent for all $w,\,y \in W$, where $\mu _2 $ is a non-negative
constant.

(f) The function $f:W \to Y$is bounded, i.e., there is a constant
$\lambda _2 > 0$ such that $\left\| {f(w)} \right\|_Y \le \lambda _2
$ for all $w \in W$,  and the function $w \in X \to f(w)$ is
Lipschitz in $X$(resp. in $Y)$, i.e.,
\[
\left\| {f(w) - f(y)} \right\|_X \le \mu _3 \left\| {w - y}
\right\|_X , \quad \forall w,\,y \in W,
\]
\[
\left\| {f(w) - f(y)} \right\|_Y \le \mu _4 \left\| {w - y}
\right\|_Y , \quad \forall w,\,y \in W,
\]
\noindent where $\mu _3 ,\;\mu _4 $ is non-negative constant.

We are now in position to state Kato's local well posedness result.
\begin{theorem}[Kato's theorem] Assume conditions (X); ($A_1$)-($A_3$) and (f) hold. Given $u_0 \in Y$, there
is $T > 0$ and unique solution $u \in C([0,T]; Y) \cap C^1([0,T];
X)$ to (\ref{eq2.1}) with $u(0) = u_0$. Moreover, the map $u_0 \in Y
\to u \in C([0,T]; Y)$ is continuous.
\end{theorem}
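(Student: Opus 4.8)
The plan is to follow Kato's standard program for quasilinear evolution equations, reducing the nonlinear Cauchy problem (\ref{eq2.1}) to a sequence of linear non-autonomous problems and then passing to the limit; since the statement is exactly Kato's abstract result, the proof is the one carried out in \cite{23}, and I sketch its architecture. The whole scheme is organized around the two-norm structure in hypothesis (X): convergence will be obtained in the weaker space $X$, while the regularity of the solution lives in $Y$, and the isomorphism $Q$ is the device that transfers estimates between the two.

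First I would develop the linear theory. Fixing $v \in C([0,T];W)$, consider the linear problem $\frac{dw}{dt} + A(v(t))w = g(t)$. Hypotheses (X) and ($A_1$) make $\{A(v(t))\}_t$ a stable family of quasi-m-accretive generators on $X$ with the uniform bound $e^{\beta s}$; ($A_3$) gives norm-continuity of $t \mapsto A(v(t)) \in L(Y,X)$; and ($A_2$), through $Q$ and the uniform/Lipschitz control of $B = [Q,A]Q^{-1}$, guarantees that $Y$ is admissible for the family and that the family is stable in the $Y$-norm as well. By Kato's linear evolution theory this yields a unique propagator $\{U_v(t,s)\}$ on $X$ that restricts to a strongly continuous family on $Y$, with bounds depending only on $\beta$, $\lambda_1$, and $T$. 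The key mechanism is that applying $Q$ to the equation turns the $Y$-estimate into an $X$-estimate for $Qw$ with an extra inhomogeneous term $B(Qw)$, so the commutator bound $\lambda_1$ enters the Gronwall constant.

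Next I would run the iteration $u_{n+1}(t) = U_{u_n}(t,0)u_0 + \int_0^t U_{u_n}(t,s)f(u_n(s))\,ds$, which solves the linearized problem with operator $A(u_n)$ and source $f(u_n)$. Using ($A_1$), ($A_2$), and the boundedness of $f$ in (f), I would prove a uniform a priori bound $\|u_n(t)\|_Y \le R$ on a common interval $[0,T]$, with $T$ depending only on $\|u_0\|_Y$ and the radius of $W$, so that every iterate stays in $W$; this is the $Y$-norm Gronwall argument built from the accretivity bound and the commutator bound described above. Then, using the $X$-Lipschitz conditions in ($A_3$) and (f), I would show $\{u_n\}$ is Cauchy in $C([0,T];X)$, producing a limit $u \in C([0,T];X)$.

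The hard part will be upgrading this limit to a genuine solution with the full regularity $u \in C([0,T];Y)\cap C^1([0,T];X)$, rather than a merely strong-$X$, weak-$Y$ object: convergence has been obtained only in the weaker norm. Here I would exploit the uniform $Y$-bound together with the reflexivity of $Y$ to get $u \in L^\infty(0,T;Y)$ and weak continuity into $Y$; the Kato trick then combines this weak continuity with continuity of $t \mapsto \|u(t)\|_Y$ (read off from the energy identity) to recover strong continuity in $Y$, after which the equation gives $u \in C^1([0,T];X)$. Finally, uniqueness and the continuity of the data-to-solution map $u_0 \mapsto u$ follow from the same $X$-norm energy estimates applied to the difference of two solutions, invoking the Lipschitz hypotheses in ($A_3$) and (f). I expect the regularity-recovery step to be the principal obstacle, since it is where the two-norm framework and reflexivity are genuinely needed.
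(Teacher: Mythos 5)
The paper contains no proof of this statement: it is Kato's abstract well-posedness theorem, imported verbatim from the cited reference [23] and used as a black box to prove Theorem 2.2. So there is no internal argument to measure your proposal against; the only fair comparison is with Kato's original proof, and your sketch does follow its broad architecture faithfully --- the linear non-autonomous theory for a stable family $\{A(v(t))\}$ with $Y$ made admissible through the commutator hypothesis ($A_2$), the iteration $u_{n+1}(t) = U_{u_n}(t,0)u_0 + \int_0^t U_{u_n}(t,s)f(u_n(s))\,ds$ with a uniform $Y$-bound and convergence in the weaker $C([0,T];X)$ metric, and the recovery of $C([0,T];Y)$ regularity via reflexivity, weak continuity, and continuity of the $Y$-norm. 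That is indeed how the result is proved.

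Two caveats on your sketch taken as a standalone proof. First, your last step is too glib: $X$-norm energy estimates applied to the difference of two solutions give uniqueness and Lipschitz continuity of $u_0 \mapsto u$ into $C([0,T];X)$, but they do not give the continuity into $C([0,T];Y)$ that the theorem asserts; in Kato's paper this is a separate and genuinely delicate argument (approximation of the initial datum combined with the uniform bounds), and it yields continuity but in general not Lipschitz or H\"older continuity in the $Y$-norm. You correctly identified the regularity-recovery step as hard, but the continuous-dependence step is at least as hard, and your one-line dismissal of it is the real gap. Second, you defer the entire linear propagator theory (existence of $U_v(t,s)$, its $Y$-invariance, the stability estimates) to ``Kato's linear evolution theory''; that theory is the bulk of the work in the reference. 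Within the conventions of this paper none of this matters --- the authors do not prove the theorem either, and citing [23] is legitimate --- but as an actual proof your proposal is an accurate outline rather than a complete argument.
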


We now provide the framework in which we shall reformulate problem
(\ref{eq1.3}).

 Let $m = u - u_{xx}$ , then Eq.(\ref{eq1.2}) takes the form of a quasi-linear evolution
equation of hyperbolic type
\[
m_t + um_x + 2u_x m + km_x = 0.
\]
By using the operator $G(x) = \frac{\cosh (x - [x] - {1 \over
2})}{2\sinh ({1 \over 2})}$, where $[x]$ denotes the integer part of
$x \in [0,1]$, then$(1 - \partial _x^2 )^{ - 1}f = G
* f$, $\forall f \in L^2$ and $G * m = u$. Using this identity, we
can rewrite Eq.(\ref{eq1.2}) as
\[
u_t + uu_x + \partial _x (G * (u^2 +{1 \over 2}u_x^2 )) + ku_x = 0.
\]
Then the closed-loop system (\ref{eq1.3}) becomes
\begin{equation}
\label{eq2.2} \left\{ {{\begin{array}{*{20}c}
 {u_t + uu_x + \partial _x (G * (u^2 + {1 \over 2}u_x^2 )) + ku_x = 0, t \geq 0,\;x \in [0,\;1]\;{\kern 1pt}}
\hfill
\\
 {u(0,t) = u(1,t)=u_x (0,t) = u_x (1,t)=0, t \geq 0 \;\;\;\quad \;} \hfill \\
 {u(0,\;x) = u_0 (x), x \in [0,\;1]
\;}
\hfill \\
\end{array} }} \right.
\end{equation}
For the system (\ref{eq2.2}), we have the following result.
\begin{theorem}  Given $u_0 (x) \in H_{0,1}^2 $, there exists a maximal
value $T = T(u_0 (x)) > 0$ and a unique solution $u(x,t)$ to the
closed loop system (\ref{eq2.2}) such that $u = u( \cdot \;;\;u_0 )
\in C([0,T];H_{0,1}^2 ) \cap C^1([0,T];H^1)$. Moreover, the solution
depends continuously on the initial data, i.e., the mapping $u_0 \to
u( \cdot \;;\;u_0 ): H^2 \to C([0,T);H_{0,1}^2 ) \cap
C^1([0,T);H^1)$ is continuous.
\end{theorem}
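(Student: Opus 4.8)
The plan is to recast the system (\ref{eq2.2}) in the abstract form (\ref{eq2.1}) and then verify the hypotheses (X), ($A_1$)--($A_3$) and (f) of Kato's theorem. First I would fix the spaces $X = H^1$ and $Y = H_{0,1}^2$, take as isomorphism $Q = \Lambda = (1-\partial_x^2)^{1/2}$ from $Y$ onto $X$, and set
\[
A(u) = (u+k)\partial_x, \qquad f(u) = -\partial_x\bigl(G*(u^2 + \tfrac12 u_x^2)\bigr),
\]
so that (\ref{eq2.2}) becomes $u_t + A(u)u = f(u)$. Condition (X) is then routine: $H_{0,1}^2 \hookrightarrow H^1$ is continuous and dense, both are reflexive Hilbert spaces, and $\Lambda$ is an isomorphism of $H^2$ onto $H^1$; the one point to check is that $\Lambda$ is compatible with the boundary conditions encoded in $Y$ on the interval $[0,1]$.

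Next I would verify ($A_1$), namely $A(u)\in G(H^1,1,\beta)$. Since $H^1$ is a Hilbert space I would check the equivalent conditions (a) and (b). For (a) I would expand $(A(u)\phi,\phi)_1 = ((u+k)\phi_x,\phi)_1$ in the $H^1$ inner product, integrate by parts, and bound the outcome by $\beta\|\phi\|_1^2$ with $\beta$ depending on $\|u\|_{H^2}$; here the homogeneous boundary conditions annihilate the endpoint contributions, and the Sobolev embedding $H^2([0,1])\hookrightarrow W^{1,\infty}$ controls the coefficients. For (b) I would solve the resolvent equation $(A(u)+\lambda)v = g$ as a first-order linear ODE and confirm surjectivity for $\lambda>\beta$.

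The main obstacle is ($A_2$): the commutator estimate for $B(w) = [\Lambda, A(w)]\Lambda^{-1}$. The guiding principle is that $A(w) = (w+k)\partial_x$ is first order, so $[\Lambda,(w+k)\partial_x]$ is again of order one and the composition with $\Lambda^{-1}$ is of order zero, hence bounded on $H^1$. Concretely I would establish a commutator bound of the form $\|[\Lambda,M_g]\Lambda^{-1}\|_{L(H^1)} \le c\,\|g\|_{H^2}$, with $M_g$ multiplication by $g$, to obtain $\|B(w)\|_{L(H^1)}\le\lambda_1$ uniformly on the ball $W$, and then, using that the commutator is affine in $w$, deduce the Lipschitz estimate $\|B(w)-B(y)\|_{L(H^1)}\le\mu_1\|w-y\|_{H^2}$. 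Making these commutator bounds rigorous — in particular verifying that the explicit operator $\Lambda$ on $[0,1]$ built from the Green's function $G$ obeys the same pseudodifferential calculus as on $\mathbb{R}$ — is the technical heart of the argument.

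Finally, ($A_3$) follows from $A(w)-A(y) = (w-y)\partial_x$ together with the Banach-algebra property of $H^1$: $\|(w-y)v_x\|_{H^1}\le c\,\|w-y\|_{H^1}\|v_x\|_{H^1}\le c\,\|w-y\|_{H^1}\|v\|_{H^2}$, which gives $\|A(w)-A(y)\|_{L(Y,X)}\le\mu_2\|w-y\|_X$. For (f) I would use that $\partial_x G*$ gains one derivative while $u\mapsto u^2+\tfrac12 u_x^2$ maps $H^2$ boundedly and Lipschitz-continuously into $H^1$ (again by the algebra property); hence $f$ maps $W$ boundedly into $H^2$ and is Lipschitz in both the $H^1$ and the $H^2$ norms. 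With (X), ($A_1$)--($A_3$) and (f) in hand, Kato's theorem yields a maximal $T>0$ and a unique solution $u\in C([0,T];H_{0,1}^2)\cap C^1([0,T];H^1)$ depending continuously on $u_0$, which is exactly the assertion.
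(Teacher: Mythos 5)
Your overall framework coincides with the paper's: the same choices $X=H^1$, $Y=H_{0,1}^2$, $Q=\Lambda$, $A(u)=(u+k)\partial_x$, $f(u)=-\partial_x(G*(u^2+\tfrac12 u_x^2))$, and your treatment of ($A_3$) and (f) (Banach-algebra property of $H^1$ in one dimension, smoothing of $\partial_x G*$) matches the paper's Lemmas 2.5 and 2.7. For ($A_2$) you propose to prove the commutator bound yourself and call it the technical heart; the paper instead quotes Kato's commutator estimate (its Lemma 2.1, with $r=-1$, $t=1$) and the verification becomes a three-line computation (its Lemma 2.6), so that ingredient is available off the shelf rather than something to be rebuilt.

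The genuine divergence, and the gap, is in ($A_1$). The paper never verifies generation on $H^1$ directly: it first proves $A(u)\in G(L^2,1,\beta)$ (Lemma 2.3), then uses the similarity transform $\Lambda A(u)\Lambda^{-1}=A(u)+B_1(u)$ with $B_1(u)=[\Lambda,A(u)]\Lambda^{-1}\in L(L^2)$, the bounded-perturbation theorem for generators, and Pazy's admissibility lemma (Lemma 2.2) to lift the $L^2$ result to $H^1$ (Lemma 2.4). Your plan --- expand $((u+k)\phi_x,\phi)_1$, integrate by parts, and solve the resolvent ODE --- fails at two concrete points. First, condition (a) must hold for all $\phi\in D(A(u))\subset X=H^1$, and such $\phi$ do \emph{not} satisfy the boundary conditions encoded in $Y$; integrating by parts in the $H^1$ inner product leaves the term $\tfrac{k}{2}\bigl[\phi^2+\phi_x^2\bigr]_0^1$ (only the terms multiplied by $u$ vanish at the endpoints, since $u\in Y$), and for $k\neq 0$ the trace $\phi_x(0)^2$ cannot be bounded by $\|\phi\|_1^2$ (concentrate $\phi_x$ near an endpoint), so no $\beta$ works; putting the boundary conditions into $D(A(u))$ instead destroys the surjectivity required in (b). Second, the resolvent equation $\lambda v+(u+k)v_x=g$ is not a routine linear ODE: when $k=0$ the coefficient $u+k$ vanishes at both endpoints (and possibly inside), so integrating factors degenerate and one needs a characteristics/flow construction --- which is essentially the semigroup that the paper's abstract route produces without ever solving the ODE. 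To close the gap you should replace your direct ($A_1$) argument by the paper's two-step lift: $L^2$ quasi-accretivity plus Hille--Yosida, then the commutator/perturbation/admissibility argument to pass from $L^2$ to $H^1$.
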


Let $A(u)u = u\partial _x u + k\partial _x u $,  $f(u) = -
\partial _x (G * (u ^2 + \frac{1}{2}u_x^2 ))$,  $Q = \Lambda $, $X = H^1$, $Y = H_{0,1}^2 $.
\begin{remark}
The operator $Q$ maps $Y$ to $X$. In fact, $\forall u \in Y$,
\begin{equation}
\begin{split}
\left\| {Qu} \right\|_1^2 & = \left\| {Qu} \right\|_0^2 + \left\|
{(Qu)_x } \right\|_0^2\\
\nonumber &=(Qu,Qu)_0 + ((Qu)_x ,(Qu)_x )_0
\\ \nonumber
&=\int_0^1 {(u - u_{xx} } )udx + \int_0^1 {(u_x - u_{xxx} } )u_x
dx\\
\nonumber &=\int_0^1 {u^2} dx - uu_x \left| {_0^1 } \right. +
\int_0^1 {u_x^2 } dx + \int_0^1 {u_x^2 } dx - u_x u_{xx} \left|
{_0^1 } \right.
 + \int_0^1 {u_{xx}^2 } dx\\
\nonumber &=\left\| u \right\|_0^2 + 2\left\| {u_x } \right\|_0^2 +
\left\| {u_{xx} } \right\|_0^2,
\end{split}
\end{equation}
so $Qu \in X$. Moreover, the operator $Q$ is an isomorphism of $Y$
onto $X$. One may see the similar argument in \cite{28}.
\end{remark}

In order to prove Theorem 2.2, by applying Theorem 2.1, we only need
to verify $A(u)$ and $f(u)$ satisfy the conditions (X);
($A_1$)-($A_3$) and (f). The following lemmas are useful for our
arguments.
\begin{lemma}[\cite{25}] Let $f\in H^S$, $S> {3 \over 2}$, then
\[
\|\Lambda^{-r}[\Lambda^{r+t+1},M_f]\Lambda^{-t}\|_{L(L^2)}\leq
c\|f\|_S, |r|, |t|\leq S-1,
\]
where $M_f$ is the operator of multiplication by $f$ and $c$ is a
constant depending only on $r,t$.
\end{lemma}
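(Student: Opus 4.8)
The plan is to move the whole operator onto the Fourier side, where $\Lambda^{s}$ acts as multiplication by the symbol $\langle\xi\rangle^{s}:=(1+\xi^{2})^{s/2}$ and $M_{f}$ acts as convolution with $\hat f$. Writing $a=r+t+1$, a direct computation shows that $\Lambda^{-r}[\Lambda^{a},M_{f}]\Lambda^{-t}$ is the integral operator on $L^{2}$ with Fourier kernel
\[
K(\xi,\eta)=\sigma(\xi,\eta)\,\hat f(\xi-\eta),\qquad
\sigma(\xi,\eta)=\frac{\langle\xi\rangle^{a}-\langle\eta\rangle^{a}}{\langle\xi\rangle^{r}\langle\eta\rangle^{t}},
\]
the difference $\langle\xi\rangle^{a}-\langle\eta\rangle^{a}$ being exactly the signature of the commutator. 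Everything then reduces to proving that this kernel defines a bounded operator on $L^{2}$ with norm at most $c\|f\|_{S}$, the constant depending only on $r,t$.

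The core of the proof is a pointwise estimate on the symbol $\sigma$, carried out separately on the near-diagonal and off-diagonal regions. Where $\xi$ and $\eta$ are comparable I would apply the mean value theorem to $\zeta\mapsto\langle\zeta\rangle^{a}$, whose derivative is $O(\langle\zeta\rangle^{a-1})$; since $a-1=r+t$, this yields the decisive gain
\[
|\sigma(\xi,\eta)|\le C\,|\xi-\eta|\,\frac{\langle\xi\rangle^{r+t}}{\langle\xi\rangle^{r}\langle\eta\rangle^{t}}\le C\,\langle\xi-\eta\rangle,
\]
so that the extra unit in the exponent $a=r+t+1$ is precisely absorbed by the factor produced by the commutator. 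On the off-diagonal region, where $|\xi-\eta|$ dominates $\max(\langle\xi\rangle,\langle\eta\rangle)$, I would bound $\sigma$ term by term and invoke Peetre's inequality $\langle\xi\rangle\le\sqrt2\,\langle\eta\rangle\langle\xi-\eta\rangle$ together with the hypothesis $|r|,|t|\le S-1$ to transfer all the growth onto the variable $\xi-\eta$.

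Finally I would pass from these symbol bounds to the $L^{2}$ operator-norm estimate. The near-diagonal part is harmless: the bound $|K|\le C\langle\xi-\eta\rangle|\hat f(\xi-\eta)|$ together with a Schur test controls it by $\|\langle\cdot\rangle\hat f\|_{L^{1}}\le C\|f\|_{S}$, using $S>3/2$. The genuine obstacle is the off-diagonal high-frequency interaction: there a crude Schur test or Young's inequality would cost roughly half a derivative and only return $\|f\|_{S+1/2+\epsilon}$. To land exactly at $\|f\|_{S}$ I would instead run this region through a Littlewood-Paley/paraproduct decomposition (equivalently, the Coifman-Meyer bilinear multiplier theorem), splitting into the low-high, high-low and high-high pieces and estimating each by a product of the form $\|f\|_{S}\|g\|_{L^{2}}$; here the output frequency never costs more than $S$ derivatives of $f$, which is what makes the count close. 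Keeping track of the constants through this decomposition shows the final bound depends only on $r$ and $t$, and this refined off-diagonal bookkeeping is the step I expect to demand the most care.
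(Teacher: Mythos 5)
The first thing to observe is that the paper contains no proof of this statement: Lemma~2.1 is quoted verbatim, with citation, from Kato's paper \cite{25} (Manuscripta Math.\ 28 (1979)), and is then used as a black box in Lemmas~2.4 and~2.6. So there is no in-paper argument to match yours against; the relevant comparison is with Kato's original argument, which is an elementary Fourier-kernel estimate rather than a paraproduct argument.

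Your reduction and near-diagonal analysis are exactly right: the kernel is $\sigma(\xi,\eta)\hat f(\xi-\eta)$ with $\sigma=(\langle\xi\rangle^{a}-\langle\eta\rangle^{a})/(\langle\xi\rangle^{r}\langle\eta\rangle^{t})$, $a=r+t+1$; the mean value theorem gives $|\sigma|\le C\langle\xi-\eta\rangle$ when $\langle\xi\rangle\sim\langle\eta\rangle$; and the Schur/Young bound $\|\langle\cdot\rangle\hat f\|_{L^{1}}\le c\|f\|_{S}$ is precisely where $S>3/2$ enters. Your diagnosis of the off-diagonal loss is also correct. But two caveats on the fix. First, Coifman--Meyer does not apply off the shelf: your $\sigma$ is unbounded (it grows in $\xi-\eta$), hence lies outside the Coifman--Meyer symbol class, so the parenthetical ``equivalently'' is not accurate; what actually closes is a hand-rolled Littlewood--Paley decomposition in which the Bernstein step $\|\hat g_{k}\|_{L^{1}}\lesssim 2^{k/2}\|g_{k}\|_{L^{2}}$ is applied to the \emph{low-frequency} factor (applying it to the high-frequency piece of $f$ recreates exactly the half-derivative loss you are trying to avoid). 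With that placement all three paraproduct pieces close, so your plan is viable, but as written the decisive step is only promised, not performed. Second, the paraproduct machinery is unnecessary: in the off-diagonal region, simply do not discard the factor $\langle\eta\rangle^{-t}$ (take $t\ge 0$; the term with $r$ is symmetric, and negative exponents are easy). Writing $|\hat f(\zeta)|=\langle\zeta\rangle^{-S}g(\zeta)$ with $\|g\|_{L^{2}}=\|f\|_{S}$, the kernel there is bounded by $C\langle\xi-\eta\rangle^{t+1-S}\langle\eta\rangle^{-t}g(\xi-\eta)$, and Cauchy--Schwarz applied to the bilinear form $\langle Tu,v\rangle$ reduces the whole estimate to
\[
\sup_{\xi}\int\langle\xi-\eta\rangle^{2(t+1-S)}\langle\eta\rangle^{-2t}\,d\eta<\infty,
\]
which holds because the exponents are nonnegative (by $|t|\le S-1$) and their sum is $2(S-1)>1$ (by $S>3/2$). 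This weighted Schur-type argument is the elementary route, in substance Kato's, and it uses each hypothesis exactly once. A final minor point: the paper works on $[0,1]$, so the lemma is needed in a Fourier-series setting; your argument transcribes verbatim with the integrals in $\xi,\eta$ replaced by sums over integers.
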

\begin{lemma}[\cite{26}]
Let $X$ and $Y$ be two Banach spaces and $Y$ be continuously and
densely embedded in $X$. Let $-A$ be the infinitesimal generator of
the $C_0$-semigroup $T(t)$ on $X$ and $Q$ be an isomorphism from $Y$
onto $X$. $Y$ is $-A$-admissible (ie. $T(t)Y\subset Y, \forall t\geq
0$, and the restriction of  $T(t)$ to $Y$ is a $C_0$-semigroup on
$Y$.) if and only if $-A_1=-QAQ^{-1}$ is the infinitesimal generator
of the $C_0$-semigroup $T_1(t)=QT(t)Q^{-1}$ on $X$.
 Moreover, if $Y$ is $-A$-admissible, then the part of $-A$ in $Y$ is the infinitesimal generator of the restriction of  $T(t)$ to $Y$.
 \end{lemma}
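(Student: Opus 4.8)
The plan is to exploit the isomorphism $Q:Y\to X$ as a similarity transform that carries the semigroup theory on $Y$ to the semigroup theory on $X$. The key elementary observation, a ``transport principle,'' is that if $S(t)$ is any one-parameter family of bounded operators on $Y$, then $S(t)$ is a $C_0$-semigroup on $Y$ with generator $B$ if and only if $QS(t)Q^{-1}$ is a $C_0$-semigroup on $X$ with generator $QBQ^{-1}$ on the domain $Q\,D(B)$. This is immediate because $Q$ is a topological isomorphism: it respects linearity and the semigroup law $S(t+s)=S(t)S(s)$, it preserves boundedness with bounds governed by $\|Q\|\,\|Q^{-1}\|$, and strong continuity transfers through $\|QS(t)Q^{-1}x-x\|_X\le\|Q\|\,\|S(t)(Q^{-1}x)-Q^{-1}x\|_Y$; applying the same reasoning to $Q^{-1}$ gives the converse, and continuity of $Q$ lets one pass $Q$ through the difference-quotient limit to identify the generator. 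Since $T(t)Y\subset Y$ makes $QT(t)Q^{-1}=QT(t)|_YQ^{-1}=T_1(t)$ a genuine bounded operator on $X$, both directions of the stated equivalence reduce to this principle once the restricted generator is identified.

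For the direction ($\Rightarrow$), assume $Y$ is $-A$-admissible, so that $S(t):=T(t)|_Y$ is a $C_0$-semigroup on $Y$. I would first establish the ``moreover'' assertion, namely that the generator of $S(t)$ is the part $-A_Y$ of $-A$ in $Y$, with domain $D(A_Y)=\{y\in D(A)\cap Y:\ Ay\in Y\}$. One inclusion is painless: if $y$ lies in the $Y$-generator domain, then $(S(t)y-y)/t$ converges in $Y$, hence in $X$ by continuity of the embedding, so $y\in D(A)$, $-Ay\in Y$, and the two generators agree on $y$. The transport principle then shows $T_1(t)$ is a $C_0$-semigroup on $X$ whose generator is $Q(-A_Y)Q^{-1}$; a direct domain computation (writing $x=Qy$, $y=Q^{-1}x$) identifies $Q(-A_Y)Q^{-1}$ with $-QAQ^{-1}=-A_1$ on $Q\,D(A_Y)=D(A_1)$, which is exactly the claim.

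For the direction ($\Leftarrow$), note that writing $T_1(t)=QT(t)Q^{-1}$ as a bounded operator on $X$ already forces $T(t)Q^{-1}x\in Y$ for every $x\in X$, i.e.\ $T(t)Y\subset Y$. Given that $T_1(t)$ is a $C_0$-semigroup on $X$, I would then apply the transport principle in reverse: $Q^{-1}T_1(t)Q=T(t)|_Y$ is a $C_0$-semigroup on $Y$, so $Y$ is $-A$-admissible, and the generator identification above shows the restriction is generated by the part of $-A$ in $Y$.

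The main obstacle is the nontrivial inclusion in the ``moreover'' step, namely showing $D(A_Y)$ lies in the domain of the $Y$-generator: for $y\in D(A_Y)$ one must upgrade the $X$-valued identity $T(t)y-y=\int_0^t T(s)(-Ay)\,ds$ to convergence in the $Y$-norm. The device is that $-Ay\in Y$ and $T(s)$ leaves $Y$ invariant, so $s\mapsto S(s)(-Ay)$ is $Y$-continuous by strong continuity of $S(t)$ on $Y$; hence $\tfrac1t\int_0^t S(s)(-Ay)\,ds\to -Ay$ in $Y$, while this average equals $(S(t)y-y)/t$ because the $Y$-valued and $X$-valued integrals coincide under the injective continuous embedding. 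This places $y$ in the $Y$-generator domain with generator value $-Ay$, completing the identification; everything else is the routine bookkeeping of conjugation by $Q$.
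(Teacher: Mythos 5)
The paper does not actually prove this lemma: it is quoted directly from Pazy's book \cite{26} (it is Theorems 4.5.5 and 4.5.8 there), so the only benchmark is the cited source's proof. Your argument is correct and is essentially that standard proof: the similarity transport of a $C_0$-semigroup through the isomorphism $Q$ in both directions, plus the identification of the generator of $T(t)|_Y$ with the part of $-A$ in $Y$, where you correctly isolate and resolve the one nontrivial point --- upgrading the $X$-valued identity $T(t)y-y=\int_0^t T(s)(-Ay)\,ds$ to $Y$-convergence via $Y$-continuity of $s\mapsto T(s)|_Y(-Ay)$ and the compatibility of the $Y$- and $X$-valued integrals under the continuous injective embedding --- exactly as the reference does.
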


Now we divide the proof of the Theorem 2.2 into the following
lemmas.
\begin{lemma}
The operator $A(u) = u\partial _x  + k\partial _x  $ with $u \in Y$
belongs to $G(L^2,1,\beta )$.
\end{lemma}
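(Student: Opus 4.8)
The goal is to show $A(u)=u\partial_x+k\partial_x\in G(L^2,1,\beta)$ for $u\in Y=H^2_{0,1}$.

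**Plan:**

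The plan is to use the Hilbert-space characterization of $G(L^2,1,\beta)$ stated in condition $(A_1)$, namely that $A(u)\in G(L^2,1,\beta)$ if and only if (a) $(A(u)\phi,\phi)_0\ge -\beta\|\phi\|_0^2$ for all $\phi\in D(A(u))$, and (b) $A(u)+\lambda I$ is onto for some $\lambda>\beta$. I would verify these two conditions in turn.

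First I would check the accretivity estimate (a). Writing $(A(u)\phi,\phi)_0=\int_0^1(u\phi_x+k\phi_x)\phi\,dx$, I integrate by parts. The constant-coefficient piece $k\int_0^1\phi_x\phi\,dx=\tfrac{k}{2}\int_0^1(\phi^2)_x\,dx=\tfrac{k}{2}[\phi^2]_0^1$ contributes only boundary terms, which vanish for $\phi$ in the appropriate domain. For the variable-coefficient piece, $\int_0^1 u\phi_x\phi\,dx=\tfrac12\int_0^1 u(\phi^2)_x\,dx=-\tfrac12\int_0^1 u_x\phi^2\,dx$ after integrating by parts and discarding boundary terms. Hence $(A(u)\phi,\phi)_0=-\tfrac12\int_0^1 u_x\phi^2\,dx\ge-\tfrac12\|u_x\|_{L^\infty}\|\phi\|_0^2$. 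Since $u\in H^2_{0,1}\subset C^1$, the quantity $\|u_x\|_{L^\infty}$ is finite, so setting $\beta=\tfrac12\|u_x\|_{L^\infty}$ (or a uniform bound over the ball $W$) gives condition (a).

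Next I would verify the range condition (b): that $A(u)+\lambda I$ is onto $L^2$ for $\lambda$ large. This amounts to solving the first-order linear ODE $(u+k)\phi_x+\lambda\phi=g$ for given $g\in L^2$, which can be done by the integrating-factor method wherever $u+k\neq0$; the quasi-m-accretivity from (a) together with standard semigroup theory (Lumer--Phillips) then yields surjectivity for $\lambda>\beta$.

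The main obstacle I expect is the careful treatment of the boundary terms and the domain $D(A(u))$ when integrating by parts: one must ensure the boundary contributions from both the $k\partial_x$ and $u\partial_x$ terms genuinely vanish, which relies on the homogeneous boundary conditions built into $Y$ and on identifying the correct domain on which $-A(u)$ generates the semigroup. A secondary subtlety is that $\beta$ should be chosen uniformly for all $u$ in the ball $W$, so that the generation estimate holds with a single constant as required by $(A_1)$; this follows from the continuous embedding $H^2\hookrightarrow C^1$ giving a uniform bound on $\|u_x\|_{L^\infty}$ over $W$.
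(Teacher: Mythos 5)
Your treatment of condition (a) is essentially identical to the paper's: integrate by parts, use the boundary conditions to discard the boundary terms, and bound the remaining term $-\frac{1}{2}\int_0^1 u_x\phi^2\,dx$ by $\frac{1}{2}\|u_x\|_{L^\infty}\|\phi\|_0^2\le\frac{1}{2}\|u\|_2\|\phi\|_0^2$, so that $\beta=\frac{1}{2}\|u\|_2$ works. That half of the lemma is fine.

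The gap is in your condition (b). You write that ``the quasi-m-accretivity from (a) together with standard semigroup theory (Lumer--Phillips) then yields surjectivity for $\lambda>\beta$.'' This is circular: condition (a) gives only quasi-\emph{accretivity}, and the ``m'' in quasi-m-accretivity \emph{is} the range condition you are trying to prove; Lumer--Phillips takes the range (or dense-range) condition as a hypothesis, it does not produce it. Accretivity alone never implies surjectivity. Your fallback, solving $(u+k)\phi_x+\lambda\phi=g$ by an integrating factor ``wherever $u+k\neq 0$,'' does not close the argument either: since $u\in H_{0,1}^2$ vanishes at $x=0,1$, the coefficient $u+k$ vanishes at the endpoints when $k=0$, and for any $k$ it may vanish in the interior (wherever $u=-k$). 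At such points the equation degenerates to $\lambda\phi=g$ and the integrating factor $\exp\bigl(\int\lambda/(u+k)\,dx\bigr)$ blows up, so the local solutions on the subintervals where $u+k\neq 0$ cannot simply be glued, and an $L^2$ bound on any glued solution is exactly what is missing. The paper argues differently: assuming $(\lambda I+A(u))w=z$, it multiplies by $w$ and integrates by parts to get $\lambda\|w\|_0^2-\frac{1}{2}\int_0^1 u_xw^2\,dx=\int_0^1 zw\,dx$, hence $(\lambda-\frac{1}{2}\|u\|_Y)\|w\|_0^2\le\|z\|_0\|w\|_0$, which gives the uniform resolvent estimate $\|(\lambda I+A(u))^{-1}\|_{L(L^2)}\le(\lambda-\beta)^{-1}$ for all $\lambda>\beta$, and then it invokes the Hille--Yosida theorem. (The paper is itself terse about actual surjectivity of $\lambda I+A(u)$, but it at least supplies the resolvent bound valid for every $\lambda>\beta$ that Hille--Yosida requires; your sketch supplies neither the bound nor the surjectivity.)
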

\begin{proof}
Due to $L^2$ being a Hilbert space, $A(u)\in G(L^2,1,\beta )$
\cite{24} if and only if there is a real number $\beta$ such that

(a)$(A(u)v,v)_0\geq -\beta \|v\|_0^2$,

(b)$-A(u)$ is the infinitesimal generator of a $C_0$-semigroup on
$L^2$ for some(or all) $\lambda \geq\beta$.

First, let us prove (a). Due to $u \in Y$, $u$ and $u_x$ belongs to
$L^\infty$. Noting that $\|u_x\|_{L^\infty}\leq \|u\|_2$, we have
\begin{equation}
\begin{split}
(A(u)v,v)_0 &=(uv_x+kv_x,v)_0\\
\nonumber &=(uv_x,v)_0+k(v_x,v)_0
\\ \nonumber
&={{1 \over 2}}(u_xv,v)_0+{k \over 2}v^2|_0^1\\
\nonumber &\leq {1 \over 2}\|u_x\|_{L^\infty}(v,v)_0\\
\nonumber &\leq{1 \over 2}\|u\|_2\|v\|_0^2
\end{split}
\end{equation}
Setting $\beta={1 \over 2}\|u\|_2$, we have $(A(u)v,v)_0\geq -\beta
\|v\|_0^2$.

Next we prove (b). Obviously $A(u) = u\partial _x+ k\partial _x$ is
a closed operator. For $w \in Y$, if there exists $z \in Y$, such
that $(\lambda I + A(u))^{ - 1}z = w$. Then $(\lambda I + A(u))w=z$,
i.e., $\lambda w + uw _x+kw_x = z$. Multiplying the both sides of
this equality by $w$, and integrating over $(0,1)$ by parts, we get
\begin{equation}
\label{eq2.4} \lambda \left\| w \right\|_X^2 -{1 \over 2}\int_0^1
{u_x } w^2dx = \int_0^1 z wdx.
\end{equation}
Since $u \in Y$, $u$ and $u_x $ belong to $L^\infty $. Note that
$\left\| {u_x } \right\|_{L^\infty } \le \left\| u \right\|_Y $,
then we have
\begin{equation}
\label{eq2.5} {1 \over 2}\int_0^1 {u_x } w^2dx \le {1 \over
2}\left\| {u_x } \right\|_{L^\infty } \left\| w \right\|_X^2\le{1
\over 2}\left\| u \right\|_Y\left\| w \right\|_X^2.
\end{equation}
On the other hand, by H\"{o}lder inequality, we have
\begin{equation}
\label{eq2.6} \int_0^1 z wdx \le \left\| z \right\|_X \left\| w
\right\|_X.
\end{equation}
Then combining (\ref{eq2.4})-(\ref{eq2.6}), we obtain
\[
\lambda \left\| w \right\|_X^2 -{1 \over 2}\left\|u \right\|_Y
\left\| w \right\|_X^2 \le \left\| z \right\|_X \left\| w \right\|_X
.
\]
That is
\begin{equation}
\label{eq2.7} (\lambda - {1 \over 2}\left\| u \right\|_Y)\left\| w
\right\|_X^2 \le \left\| z \right\|_X \left\| w \right\|_X.
\end{equation}
Setting $\beta = {1 \over 2}\left\| v \right\|_Y $, then by
(\ref{eq2.7}), we have
\[
\left\| w \right\|_X \le \frac{1}{\lambda - \beta }\left\| z
\right\|_X ,\quad \forall \lambda > \beta.
\]
i.e.,
\[
\left\|(\lambda I + A(u))^{ - 1}\right\|_{L(X)}\le \frac{1}{\lambda
- \beta },\quad \forall \lambda > \beta.
\]
By Hille-Yosida theorem \cite{27}, we conclude that the operator
$-A(u)$ is the infinitesimal generator of a $C_0$-semigroup on $X$.
This completes the proof of Lemma 2.3.
\end{proof}
\noindent
\begin{lemma}
 The operator $A(u) = u\partial _x  + k\partial _x  $ with
$u \in Y$ belongs to $G(H^1,1,\beta )$.
\end{lemma}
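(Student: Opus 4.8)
The plan is to transfer the generation property from $L^2$ (already established in Lemma 2.3) to $H^1$ by means of the isomorphism $\Lambda$, using the admissibility criterion of Lemma 2.2. I would apply Lemma 2.2 with $L^2$ playing the role of $X$, with $H^1$ playing the role of $Y$, and with $\Lambda$ (which is an isomorphism of $H^1$ onto $L^2$, and satisfies $\|v\|_1 = \|\Lambda v\|_0$) playing the role of $Q$. By Lemma 2.3, $-A(u)$ generates a $C_0$-semigroup $T(t)$ on $L^2$. Lemma 2.2 then asserts that $H^1$ is $-A(u)$-admissible, and that the part of $-A(u)$ in $H^1$ generates the restriction of $T(t)$ to $H^1$, precisely when $-A_1 := -\Lambda A(u)\Lambda^{-1}$ generates a $C_0$-semigroup on $L^2$. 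So the whole proof reduces to understanding $A_1$ as an operator on $L^2$.

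The next step is to write $A_1 = A(u) + B(u)$ with $B(u) = [\Lambda, A(u)]\Lambda^{-1}$ and to verify that $B(u)$ is bounded on $L^2$. Since $\Lambda$ commutes with $\partial_x$, the constant-coefficient part $k\partial_x$ drops out of the commutator and one gets the operator identity $[\Lambda, A(u)] = [\Lambda, M_u]\partial_x$, so that $B(u) = [\Lambda, M_u]\,\partial_x \Lambda^{-1}$. The factor $\partial_x \Lambda^{-1}$ is a Fourier multiplier of order zero and is therefore bounded on $L^2$. For the factor $[\Lambda, M_u]$ I would invoke Lemma 2.1 with $r = t = 0$, which is legitimate since $u \in Y \subset H^2$ and $|r|,|t| \le S-1 = 1$ with $S = 2 > \tfrac{3}{2}$; this yields $\|[\Lambda, M_u]\|_{L(L^2)} \le c\|u\|_2$, and hence $\|B(u)\|_{L(L^2)} \le c\|u\|_2$. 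I expect this commutator bound to be the main obstacle: one must set up the identity $[\Lambda, A(u)] = [\Lambda, M_u]\partial_x$ correctly and match the indices in Lemma 2.1 so that the commutator lands in $L(L^2)$ rather than raising the order.

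Finally, since $-A_1 = -A(u) - B(u)$ is a bounded perturbation of the generator $-A(u) \in G(L^2,1,\beta)$, the bounded-perturbation theorem shows that $-A_1$ generates a $C_0$-semigroup on $L^2$ with $\|e^{-sA_1}\|_{L(L^2)} \le e^{(\beta + \|B(u)\|_{L(L^2)})s}$; note that the constant $M=1$ is preserved under this perturbation. By Lemma 2.2 the semigroup generated by $-A_1$ on $L^2$ is exactly $\Lambda e^{-sA(u)}\Lambda^{-1}$, so using $\|v\|_1 = \|\Lambda v\|_0$ I obtain
\[
\left\| e^{-sA(u)} \right\|_{L(H^1)} = \left\| \Lambda e^{-sA(u)}\Lambda^{-1} \right\|_{L(L^2)} = \left\| e^{-sA_1} \right\|_{L(L^2)} \le e^{\beta' s},
\]
with $\beta' = \beta + c\|u\|_2$. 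This is precisely the statement $A(u) \in G(H^1,1,\beta')$, which completes the argument.
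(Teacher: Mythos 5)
Your argument is correct, and its technical core coincides with the paper's: the same conjugation $A_1(u) = \Lambda A(u)\Lambda^{-1} = A(u) + B(u)$ with $B(u) = [\Lambda, M_u]\partial_x\Lambda^{-1}$ (the constant-coefficient part $k\partial_x$ dropping out of the commutator since $[\Lambda,\partial_x]=0$), the same bound $\|B(u)\|_{L(L^2)} \le c\|u\|_2$ via Lemma 2.1 with $r=t=0$, the same bounded-perturbation theorem giving $A_1(u)\in G(L^2,1,\beta')$, and the same appeal to Lemma 2.2 to deduce admissibility of the smaller space from generation by $-A_1$ on $L^2$. The difference is structural. The paper follows Kato's Hilbert-space characterization of $G(H^1,1,\beta)$ and therefore proves two separate things: (a) the form estimate $(A(u)v,v)_1 \ge -\beta\|v\|_1^2$, obtained from the identity $\Lambda(uv_x)=[\Lambda,u]v_x+u\partial_x\Lambda v$, an integration by parts on the middle term, and Lemma 2.1; and (b) generation of a $C_0$-semigroup on the smaller space, by exactly the conjugation/perturbation/admissibility argument you give. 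You dispense with (a) altogether and instead extract the quasi-contraction constant $M=1$ directly: since $\Lambda$ is an isometry of $H^1$ onto $L^2$ and the bounded-perturbation theorem preserves $M=1$, the bound $e^{\beta' s}$ for $e^{-sA_1}$ on $L^2$ transfers verbatim to $e^{-sA(u)}$ on $H^1$. That is a genuine economy --- one commutator estimate instead of two, and no separate dissipativity computation --- and it is logically sound, because $A(u)\in G(H^1,1,\beta')$ is by definition a statement about the semigroup norm, not about the quadratic form. The only point to flag is that your identity $\|v\|_1=\|\Lambda v\|_0$ is exact in the periodic interpretation of $\Lambda$ that the paper adopts implicitly through its Green's function $G$; but the paper's own step $(A(u)v,v)_1=(\Lambda A(u)v,\Lambda v)_0$ rests on the identical identification, so you assume nothing beyond what the paper already does.
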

\begin{proof}
Due to $H^1$ being a Hilbert space, $A(u)\in G(H^1,1,\beta )$
\cite{24} if and only if there is a real number $\beta$ such that

(a)$(A(u)v,v)_1\geq -\beta \|v\|_1^2$,

(b)$-A(u)$ is the infinitesimal generator of a $C_0$-semigroup on
$H^1$ for some(or all) $\lambda \geq\beta$.

 First, let us prove (a). Let $u \in Y$, then $u$ and $u_x$
belongs to $L^\infty$ and $\|u_x\|_{L^\infty}\leq \|u\|_2$. Note
that
\[
\Lambda(uv_x)=[\Lambda, u]v_x+u\Lambda v_x=[\Lambda,
u]v_x+u\partial_x\Lambda v.
\]
Then we have
\begin{equation}
\begin{split}
(A(u)v,v)_1&=(\Lambda (uv_x+kv_x),\Lambda v)_0\\ \nonumber
 &=([\Lambda,u]v_x,\Lambda v)_0-{{{1} \over {2}}}(u_x \Lambda v, \Lambda v)_0+k(\Lambda v_x,\Lambda v)_0
\\ \nonumber
&\leq \| [\Lambda , u]\|_{L(L^2)}\|\Lambda v \|_0^2+
{{1 \over 2}}\|u_x\|_{L^\infty}\|\Lambda v\|_0^2\\
\nonumber &\leq c\|u\|_2\| v\|_1^2,
\end{split}
\end{equation}
where we apply Lemma 2.1 with $r=0$ , $t=0$. Setting
$\beta=c\|u\|_2$, we have $(A(u)v,v)_1\geq -\beta \|v\|_1^2$.

Next we prove (b). Note that $Q=\Lambda$ is an isomorphism of $Y$
onto $X$ and $Y$ is continuously and densely embeded in $X$. Define
\[
A_1(u):=[Q,A(u)]Q^{-1}=[\Lambda, A(u)]\Lambda^{-1}, \quad
B_1(u):=A_1(u)-A(u),
\]
then
\begin{equation}
\begin{split}
 \label{eq2.26}
\nonumber B_1 (u) &= [\Lambda,\;(u + \gamma )\partial _x ]\Lambda
^{-1} - (u + \gamma )\partial _x \\ & \nonumber =
[\Lambda,\;u\partial _x ]\Lambda ^{-1} + \gamma \Lambda \partial _x
\Lambda ^{-1} - (u + \gamma )\partial _x
\\ &
\nonumber = [\Lambda,\;u]\partial _x \Lambda ^{-1} + u\Lambda
\partial _x \Lambda ^{-1} - u\partial _x \\ & \nonumber =
[\Lambda,\;u]\partial _x \Lambda ^{-1}.
\end{split}
\end{equation}
 Let $v\in L^2$ and $u\in Y$. Then we have

\begin{equation}
\begin{split}
 \label{eq2.27}
\nonumber \left\| {B_1 (u)v} \right\|_0 &= \left\| {[\Lambda
,\;u]\partial _x \Lambda ^{-1}v} \right\|_0 \\ &
 \nonumber = \left\| {[\Lambda,\;u]\Lambda ^{ - 1}\partial _x
v} \right\|_0 \\ & \nonumber
 \le \left\| {[\Lambda,\;u]}\right\|_{L(L^2)}
\left\| {\Lambda ^{ - 1}\partial _x v}\right\|_0 \\ &
\nonumber \le c\left\| u \right\|_2 \left\| v \right\|_0 \\
\end{split}
\end{equation}

where we apply Lemma 2.1 with $r=0$, $t=0$ . Therefore, we obtain
that $B_1(u)\in L(L^2)$.

Note that $A_1(u)=A(u)+B_1(u)$ and $A(u)\in G(L^2,1,\beta)$ in Lemma
2.3. By a perturbation theorem for semigroups (cf. \S  5.2 Theorem
2.3 in \cite{26}), we obtain that $A_1(u)\in G(L^2,1,\beta')$. By
applying Lemma 2.2 with $Y=H_{0,1}^1$, $X=L^2$ and $Q=\Lambda$, we
conclude that $Y$ is $-A$-admissible. So, $-A(u)$ is the
infinitesimal generator of a $C_0$-semigroup on $Y$. This completes
the proof of Lemma 2.4.
\end{proof}
\begin{lemma}
For all $u\in Y$, $A(u)\in L(Y,X)$. Moreover,
\[
\left\| (A(u)-A(z))w \right\|_X\leq \left\| u-z\right\|_X\left\| w
\right\|_Y, u,z,w\in Y.
\]
\end{lemma}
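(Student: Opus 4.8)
The plan is to verify the two assertions of Lemma~2.5 separately, both reducing to elementary estimates once the operator $A(u)=u\partial_x+k\partial_x$ is written out. For the boundedness claim $A(u)\in L(Y,X)$, I would take an arbitrary $w\in Y=H_{0,1}^2$ and estimate $\|A(u)w\|_X=\|A(u)w\|_1$ directly. Since $A(u)w=uw_x+kw_x=(u+k)w_x$, the $H^1$-norm of this expression controls $uw_x$, $ku_x w$-type terms and their derivatives; using that $u\in Y$ gives $u,u_x\in L^\infty$ with $\|u\|_{L^\infty},\|u_x\|_{L^\infty}\le c\|u\|_2$ (the same Sobolev embedding already invoked in Lemmas~2.3 and~2.4), every product is bounded by a constant times $\|u\|_2\|w\|_2$ plus $|k|\|w\|_2$. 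This shows $A(u)$ maps $Y$ into $X$ boundedly.

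For the Lipschitz-type estimate, the key observation is linearity in $u$: the operator depends affinely on $u$ through the multiplication term, so
\[
(A(u)-A(z))w=(u\partial_x+k\partial_x)w-(z\partial_x+k\partial_x)w=(u-z)w_x.
\]
The constant dispersive part $k\partial_x$ cancels exactly, which is what makes the difference so clean. I would then estimate $\|(u-z)w_x\|_X=\|(u-z)w_x\|_1$ by expanding into the $L^2$-norms of $(u-z)w_x$ and its $x$-derivative $(u-z)_x w_x+(u-z)w_{xx}$, and bounding each factor in $L^\infty$ or $L^2$ as appropriate.

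The main obstacle, and the only place requiring care, is matching the norms so that the bound comes out as $\|u-z\|_X\|w\|_Y$ rather than $\|u-z\|_Y\|w\|_Y$; that is, the difference $u-z$ must be measured only in the weaker norm $X=H^1$, not in $Y=H^2$. The natural route is to put $w$-factors in $L^\infty$ (controlled by $\|w\|_Y$ via the embedding) and keep $u-z$ in $L^2$, so that in each product one factor of $u-z$ appears at the level of $u-z$ or $(u-z)_x$ in $L^2$, both of which are dominated by $\|u-z\|_1=\|u-z\|_X$, while the accompanying $w$-derivatives are absorbed into $\|w\|_Y$. Since $w\in Y$ has $w_x,w_{xx}\in L^2$ and $w,w_x\in L^\infty$, one arranges the H\"older pairings to avoid ever needing $u_{xx}-z_{xx}$, thereby producing the stated inequality
\[
\|(A(u)-A(z))w\|_X\le\|u-z\|_X\|w\|_Y.
\]
Once this distribution of derivatives is fixed, the remaining steps are routine applications of H\"older's inequality and the Sobolev embedding $H^1(0,1)\hookrightarrow L^\infty(0,1)$, and the lemma follows.
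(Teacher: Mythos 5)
Your proposal is correct and follows essentially the same route as the paper: the cancellation $(A(u)-A(z))w=(u-z)w_x$ of the $k\partial_x$ term, followed by H\"older/Sobolev product estimates placing $w_x$ (and $w_{xx}$'s companion $u-z$) so that only $\|u-z\|_{H^1}$ and $\|w\|_{H^2}$ appear, with boundedness obtained from the same estimate. In fact your version is slightly more careful than the paper's, which compresses the $H^1$ product estimate $\|(u-z)w_x\|_1\le\|u-z\|_1\|w_x\|_{L^\infty}$ into one line without expanding the derivative term $(u-z)w_{xx}$ that you explicitly handle.
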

\begin{proof}
For  $u,z,w\in Y$, we have
\begin{equation}
\begin{split}
\left\| (A(u)-A(z))w \right\|_X &=\left\| (uw_x+kw_x)-(zw_x+kw_x)
\right\|_X \\ \nonumber & \leq \left\| u-z\right\|_X\left\|w_x
\right\|_{L^\infty} \\ \nonumber &\leq\left\| u-z \right\|_X\left\|w
\right\|_Y.
\end{split}
\end{equation}
 Taking $z=0$ in the above inequality, we obtain $A(u)\in
L(Y,X)$. This completes the proof of Lemma 2.5.
\end{proof}
\begin{lemma}
For $u\in Y$, $B(u)=[\Lambda, A(u)]\Lambda^{ - 1}\in L(X)$, and
\[
\left\| {(B(u) - B(z))w} \right\|_X \le c\left\| {u - z} \right\|_Y
\left\| w \right\|_X , \quad \forall  u,z \in Y, \quad w \in X.
\]
\end{lemma}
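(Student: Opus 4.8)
The plan is to exploit the fact, already established in the proof of Lemma~2.4, that the $k\partial_x$ part of $A(u)$ commutes with $\Lambda$, so that
\[
B(u)=[\Lambda,A(u)]\Lambda^{-1}=[\Lambda,M_u]\partial_x\Lambda^{-1}=[\Lambda,M_u]\Lambda^{-1}\partial_x,
\]
where $M_u$ denotes multiplication by $u$ and I have used $\partial_x\Lambda^{-1}=\Lambda^{-1}\partial_x$. The crucial structural observation is that $B$ is linear in its multiplication argument: since $A(u)-A(z)=M_{u-z}\partial_x$, the $k$-terms cancel and
\[
B(u)-B(z)=[\Lambda,M_{u-z}]\Lambda^{-1}\partial_x=B(u-z).
\]
Hence it suffices to prove the single estimate $\|B(w)v\|_X\le c\|w\|_Y\|v\|_X$ for all $w\in Y$ and $v\in X$; taking $w=u-z$ yields the Lipschitz bound, while the choice $z=0$ (equivalently $w=u$) yields $B(u)\in L(X)$.

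To prove that estimate I would pass to the $L^2$ picture via the identity $\|f\|_X=\|f\|_1=\|\Lambda f\|_0$, valid in the present periodic framework determined by the Green's function $G$, where the boundary contributions vanish. Then
\[
\|B(u)v\|_X=\|\Lambda B(u)v\|_0=\|\Lambda[\Lambda,M_u]\Lambda^{-1}\partial_x v\|_0 .
\]
The main step is to recognize that $\Lambda[\Lambda,M_u]\Lambda^{-1}$ is exactly of the form controlled by Lemma~2.1. Writing it as $\Lambda^{-r}[\Lambda^{r+t+1},M_u]\Lambda^{-t}$ forces $r=-1$ and $t=1$, which satisfy $|r|,|t|\le S-1=1$ for $S=2$; Lemma~2.1 then gives $\|\Lambda[\Lambda,M_u]\Lambda^{-1}\|_{L(L^2)}\le c\|u\|_2=c\|u\|_Y$. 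Combining this with $\|\partial_x v\|_0\le\|v\|_1=\|v\|_X$ completes the estimate,
\[
\|B(u)v\|_X\le\|\Lambda[\Lambda,M_u]\Lambda^{-1}\|_{L(L^2)}\,\|\partial_x v\|_0\le c\|u\|_Y\|v\|_X .
\]

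The step I expect to be the main obstacle is the correct bookkeeping of the (fractional) powers of $\Lambda$ needed to cast $\Lambda[\Lambda,M_u]\Lambda^{-1}$ into the precise shape required by Lemma~2.1, together with checking that the resulting exponents $r=-1$, $t=1$ lie in the admissible range $|r|,|t|\le S-1$. This is a boundary case at $S=2$, so the regularity $u\in H^2=Y$ is used sharply. A secondary point requiring care is the commutation $\partial_x\Lambda^{-1}=\Lambda^{-1}\partial_x$ and the identification $\|\cdot\|_1=\|\Lambda\cdot\|_0$, both of which rely on the periodic structure encoded in $G$ so that no boundary terms obstruct the passage to $L^2$ operator norms.
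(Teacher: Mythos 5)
Your proposal is correct and follows essentially the same route as the paper: both exploit the cancellation of the $k\partial_x$ terms so that $B(u)-B(z)=[\Lambda,M_{u-z}]\partial_x\Lambda^{-1}$, commute $\partial_x$ past $\Lambda^{-1}$ to place it on the argument, apply Lemma~2.1 with $r=-1$, $t=1$ (the same boundary case $|r|,|t|\le S-1$ with $S=2$) to bound $\|\Lambda[\Lambda,M_{u-z}]\Lambda^{-1}\|_{L(L^2)}\le c\|u-z\|_Y$, and finally take $z=0$ to conclude $B(u)\in L(X)$. The only cosmetic difference is that you package the linearity of $B$ as an explicit reduction to a single estimate, whereas the paper estimates the difference directly.
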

\begin{proof}
Let $u,z \in Y$, $w \in X$, then
\begin{equation}
\begin{split}
 \left\| {(B(u) - B(z))w} \right\|_X &= \left\| \Lambda{[\Lambda
,(u - z)\partial _x ]\Lambda ^{ - 1}w} \right\|_0\\ \nonumber &
 \le \left\| \Lambda{[\Lambda ,(u - z)]} \Lambda ^{ - 1}\right\|_{L(L^2)} \left\|
{w_x } \right\|_0 \\ \nonumber & \le c \left\| {u - z} \right\|_Y
\left\| w \right\|_X ,
\end{split}
\end{equation}
\noindent where we apply the Lemma 2.1 with $r=-1$ and $t=1$ .
Taking $z= 0$ in the above inequality, we obtain $B(u) \in L(X)$ .
This completes the proof of Lemma 2.6.
\end{proof}
\begin{lemma}
$f(u) = -
\partial _x (G * (u^2 + {1 \over 2}u_x^2 ))$ satisfies

(a) $\left\| {f(u)} \right\|_Y \le c\left\| u \right\|_Y^2 $, $u \in
Y$;

(b) $\left\| {f(u) - f(z)} \right\|_X \le c\left\| {u - z}
\right\|_X $, $u,z \in Y$;

(c) $\left\| {f(u) - f(z)} \right\|_Y \le c\left\| {u - z}
\right\|_Y $, $u,z \in Y$.
\end{lemma}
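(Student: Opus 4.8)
The plan is to reduce all three estimates to two facts: the one-derivative smoothing of the operator $\partial_x(G * \cdot) = \partial_x \Lambda^{-2}$, and the algebra property of $H^1$ on the interval. For the first, since $v = G * g = \Lambda^{-2}g$ solves $v - v_{xx} = g$, the operator $\partial_x\Lambda^{-2}$ carries $H^s$ into $H^{s+1}$ boundedly, so that
\[
\|f(u)\|_{s+1} = \|\partial_x(G*(u^2 + \tfrac12 u_x^2))\|_{s+1} \le c\,\|u^2 + \tfrac12 u_x^2\|_s
\]
for the relevant values $s = 0,1$. For the second, the embedding $H^1(0,1)\hookrightarrow L^\infty(0,1)$ together with the product rule gives $\|pq\|_1 \le c\|p\|_1\|q\|_1$, i.e. $H^1$ is a Banach algebra; I will also use $\|u_x\|_1 \le \|u\|_2 = \|u\|_Y$ repeatedly.

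For (a) I would take $s=1$ and estimate the quadratic nonlinearity in $H^1$:
\[
\|f(u)\|_Y \le c\,\|u^2 + \tfrac12 u_x^2\|_1 \le c\big(\|u\|_1^2 + \|u_x\|_1^2\big) \le c\,\|u\|_2^2 = c\,\|u\|_Y^2,
\]
where the middle step is the algebra property applied to $u\cdot u$ and $u_x\cdot u_x$.

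For (b) and (c) the device is the same: write the difference of nonlinearities as
\[
(u^2 - z^2) + \tfrac12(u_x^2 - z_x^2) = (u - z)(u + z) + \tfrac12(u_x - z_x)(u_x + z_x),
\]
apply the smoothing operator, and factor. For (b) I would use the $L^2\to H^1$ bound ($s=0$), estimate each product in $L^2$ by H\"older and $H^1\hookrightarrow L^\infty$, and control $\|u-z\|_0$ and $\|u_x-z_x\|_0$ by $\|u-z\|_X$. For (c) I would use the $H^1\to H^2$ bound ($s=1$) and the algebra property, bounding each product in $H^1$ by $c\,\|u-z\|_2\,\|u+z\|_2$. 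In both cases $u,z$ lie in the fixed ball $W\subset Y$, so $\|u+z\|_2$ and $\|u_x+z_x\|_1$ are bounded by a constant depending only on the radius of $W$, which is absorbed into $c$; this is exactly what turns the bilinear estimates into the asserted Lipschitz bounds.

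The routine part is the product estimates; the two points deserving genuine care are the mapping property of $\partial_x(G*\cdot)$ on $[0,1]$ for the explicit kernel $G(x) = \cosh(x-[x]-\tfrac12)/(2\sinh\tfrac12)$, and --- the real obstacle --- verifying that $f(u)$ actually belongs to $Y = H_{0,1}^2$ rather than merely to $H^2$, that is, that $\partial_x(G*g)$ satisfies the homogeneous boundary conditions defining $Y$. Only once membership in $Y$ is established is the quantity $\|f(u)\|_Y$ in (a) meaningful, so I would examine the boundary values of $v_x$ and $v_{xx}$, with $v = G*g$, before invoking the norm estimates above.
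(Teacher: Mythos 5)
Your proposal follows essentially the same route as the paper's proof: factor the differences of squares as $(u-z)(u+z)$ and $(u_x-z_x)(u_x+z_x)$, use the one-derivative smoothing of $\partial_x(G*\cdot)=\partial_x\Lambda^{-2}$, bound the products via the Banach algebra property of $H^1_{0,1}$ for (c) (and H\"older plus $H^1\hookrightarrow L^\infty$ for (b)), and obtain (a) by taking $z=0$. Your closing concern---that $f(u)$ must actually satisfy the boundary conditions defining $Y=H^2_{0,1}$ for the norm in (a) to make sense---is a legitimate point that the paper's proof passes over in silence, but it is a caveat about rigor rather than a difference in method.
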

\begin{proof}
Let $u,z \in Y$. Note that $H_{0,1}^1$ is a Banach algebra. Then we
have
\begin{equation}
\begin{split}
\left\| {f(u) - f(z)} \right\|_Y &= \| -\partial _x G*( u^2-z^2 + {1 \over 2}u_x^2 -{1 \over 2}z_x^2 )\|_2\\
\nonumber & \le \|(u - z)(u + z)\|_1+{1\over2}\|(u-z)_x(u+z)_x\|_1
\\ \nonumber
& \leq \| u - z\|_1 \|u + z \|_1+{1\over
2}\|(u-z)_x\|_1\|(u+z)_x\|_1\\ \nonumber & \leq
\|u-z\|_2\|u+z\|_2+{1\over 2}\|u-z\|_2\|u+z\|_2\\
\nonumber & \leq ({3\over 2}\|u\|_2+{3\over 2}\|z\|_2)\|u-z\|_2\\
\nonumber & =({3\over 2}\|u\|_Y+{3\over 2}\|z\|_Y)\|u-z\|_Y.
\end{split}
\end{equation}
This proves (c). Taking $z = 0$ in the above inequality, we obtain
(a). Next, we prove (b).

Let $v,z\in Y$. Note that $H_{0,1}^1 $ is a Banach algebra. Then we
get
\begin{equation}
\begin{split}
\left\| {f(u) - f(z)} \right\|_X &= \| -\partial _x G*( u^2-z^2 + {1 \over 2}u_x^2 -{1 \over 2}z_x^2 )\|_1\\
\nonumber & \le \|(u - z)(u + z)\|_0+{1\over2}\|(u-z)_x(u+z)_x\|_0
\\ \nonumber &
 \leq \| u - z\|_0 \|u + z \|_0+{1\over
2}\|(u-z)_x\|_0\|(u+z)_x\|_0\\ \nonumber & \leq
\|u-z\|_1\|u+z\|_2+{1\over 2}\|u-z\|_1\|u+z\|_2\\
\nonumber & \leq ({3\over 2}\|u\|_2+{3\over 2}\|z\|_2)\|u-z\|_1\\
\nonumber & =({3\over 2}\|u\|_Y+{3\over 2}\|z\|_Y)\|u-z\|_X.
\end{split}
\end{equation}
 This completes the proof of Lemma 2.7.
\end{proof}\\
\begin{proof}[Proof of Theorem 2.2] Combining Theorem 2.1 and Lemma 2.3-2.7, we get the statement of Theorem 2.2.
\end{proof}

\section{Blow up}
 \setcounter {equation}{0}
    Firstly, by using multiplier technique, we obtain the following
conservation law of the closed-loop system (\ref{eq2.2}).
\begin{theorem}
Let $T >0$ be the maximal time of existence of the solution $u(x,t)$
to the closed-loop system(\ref{eq2.2}) (or (\ref{eq1.3})) with the
initial data $u_0 (x) \in H_{0,1}^2 $, then
\begin{equation}
\label{eq3.1}\left\| {u( \cdot ,t)} \right\|_1^2 = \left\| {u_0}
\right\|_1^2 .
\end{equation}
\end{theorem}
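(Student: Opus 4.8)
The plan is to prove that $\tfrac{d}{dt}\|u(\cdot,t)\|_1^2=0$ by a standard energy/multiplier computation: test the evolution equation against the solution $u$ itself, integrate over $[0,1]$, and exploit the homogeneous boundary data $u(0,t)=u(1,t)=u_x(0,t)=u_x(1,t)=0$ to discard every boundary contribution that arises during integration by parts. First I would multiply the differential equation in (\ref{eq1.2}) by $u$ and integrate over $[0,1]$. The two time-derivative terms are designed to reproduce the $H^1$ energy: $\int_0^1 u_t u\,dx=\tfrac12\tfrac{d}{dt}\int_0^1 u^2\,dx$, while integrating $-\int_0^1 u_{xxt}u\,dx$ by parts once — the boundary term $[u_{xt}u]_0^1$ vanishing because $u=0$ at $x=0,1$ — gives $\int_0^1 u_{xt}u_x\,dx=\tfrac12\tfrac{d}{dt}\int_0^1 u_x^2\,dx$. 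Adding these yields exactly $\tfrac12\tfrac{d}{dt}\|u\|_1^2$, since $\|u\|_1^2=\int_0^1(u^2+u_x^2)\,dx$; this step is legitimate because $u\in C^1([0,T];H^1)$ guarantees $u_{xt}=(u_t)_x\in L^2$.

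The heart of the argument is to show that the remaining terms integrate to zero. The cubic convective term gives $3\int_0^1 u^2u_x\,dx=[u^3]_0^1=0$. For the pair $-2u_xu_{xx}-uu_{xxx}$, I would integrate the third-derivative term by parts, $-\int_0^1 u^2u_{xxx}\,dx=-[u^2u_{xx}]_0^1+2\int_0^1 uu_xu_{xx}\,dx=2\int_0^1 uu_xu_{xx}\,dx$, which exactly cancels the contribution $-2\int_0^1 uu_xu_{xx}\,dx$ coming from $-2u_xu_{xx}$ (equivalently, one notes that the whole nonlinear part equals the perfect $x$-derivative $\partial_x(\tfrac32u^2-uu_{xx}-\tfrac12u_x^2)$). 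Finally the term $k\int_0^1(u_x-u_{xxx})u\,dx$ splits into $\tfrac{k}{2}[u^2]_0^1=0$ and, after one integration by parts, $k\int_0^1 u_{xx}u_x\,dx=\tfrac{k}{2}[u_x^2]_0^1=0$; here it is crucial that $u_x$ as well as $u$ vanishes at both endpoints. Collecting everything leaves $\tfrac12\tfrac{d}{dt}\|u\|_1^2=0$, and integrating in time gives (\ref{eq3.1}).

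The step I expect to be the genuine obstacle is not this algebra but its rigorous justification. The manipulations above involve $u_{xxx}$ and $u_{xxt}$, whereas the solution supplied by Theorem 2.2 lies only in $C([0,T];H_{0,1}^2)\cap C^1([0,T];H^1)$, so $u_{xxx}$ need not belong to $L^2$ and the term-by-term integrations by parts are \emph{a priori} only formal. I would resolve this by first carrying out the computation for smooth (or more regular) initial data, for which every term is classically defined, and then extending to general $u_0\in H_{0,1}^2$ by a density argument: by Theorem 2.2 the map $u_0\mapsto u(\cdot,t)$ is continuous into $C([0,T];H_{0,1}^2)$, so both $t\mapsto\|u(\cdot,t)\|_1^2$ and $\|u_0\|_1^2$ depend continuously on $u_0$, and the identity passes to the limit. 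Alternatively one can read the calculation off the mild form (\ref{eq2.2}), where the operator $G*$ smooths the nonlinearity and no explicit third derivative appears, so that the pairing against $u$ is well defined from the outset; in that formulation the only point still needing care is the bookkeeping of boundary terms.
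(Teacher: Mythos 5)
Your formal computation is the same energy identity as the paper's, but you implement it at a different level of the equation, and that difference matters. The paper never touches the third--order form (\ref{eq1.2}): it multiplies the nonlocal equation (\ref{eq2.2}) by $u-u_{xx}$ and integrates by parts, so every term involves at most $u_{xx}$, $u_t$, $u_{tx}$, and $G*(u^2+\frac{1}{2}u_x^2)$, whose second derivative is eliminated via $\partial_x^2(G*f)=G*f-f$. Consequently the computation is rigorous exactly in the class $C([0,T];H_{0,1}^2)\cap C^1([0,T];H^1)$ supplied by Theorem 2.2, and no regularization is needed. Your version --- multiplying (\ref{eq1.2}) by $u$ --- is the same identity after moving $(1-\partial_x^2)$ across the pairing, and your algebra is correct (the cancellation of $-2\int_0^1 uu_xu_{xx}\,dx$ against $-\int_0^1 u^2u_{xxx}\,dx$, and the vanishing of the $k$-terms using $u_x(0,t)=u_x(1,t)=0$); but, as you yourself note, it genuinely requires $u_{xxx},u_{xxt}\in L^2$, which Theorem 2.2 does not give. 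So the burden of your proof falls entirely on the repair step.

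Your two proposed repairs are not equally sound. The density argument has a real gap: it presupposes that smoother initial data (say $H^3$ with the boundary conditions) produce solutions that remain smooth on a common time interval, i.e.\ a persistence-of-regularity statement. Theorem 2.2 is proved only for $Y=H_{0,1}^2$, $X=H^1$, and nothing in the paper provides well-posedness in higher norms; one would have to rerun the whole Kato machinery in higher-order spaces, which is a nontrivial addition, not a routine limit. The mild-form alternative is the right fix --- it is in fact the paper's actual proof --- but as you state it (``the pairing against $u$'') it does not work: testing (\ref{eq2.2}) against $u$ in $L^2$ yields $\frac{1}{2}\frac{d}{dt}\|u\|_0^2+\int_0^1\partial_x\bigl(G*(u^2+\frac{1}{2}u_x^2)\bigr)u\,dx+\dots=0$, which is not a conservation law. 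The multiplier must be $u-u_{xx}$ (equivalently, the $H^1$ inner product with $u$); with that choice the boundary terms vanish by the boundary conditions, all nonlinear contributions cancel as in your calculation, and no third derivative of $u$ ever appears. With this correction your argument collapses onto the paper's proof.
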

\begin{proof}
 Multiplying the first equation of the closed-loop system (\ref{eq2.2}) by $u -
u_{xx} $, and integrating over $(0,1)$ by parts, we get

\begin{equation}
\label{eq3.2}
\begin{split}
 \int_0^1 {u_t (u - u_{xx} )dx}   &= - \int_0^1 {uu_x (u - u_{xx} )dx} -k\int_0^1 {u_x (u
-u_{xx} )dx}\\
&- \int_0^1 {\partial _x (G * (u^2 + {1 \over 2}u_x^2 ))(u - u_{xx}
)dx}
\end{split}
\end{equation}
For the LHS of (\ref{eq3.2}), using the  boundary conditions of the
closed-loop system (\ref{eq2.2}), we have
\begin{equation}
\label{eq3.3}
\begin{split}
\int_0^1 {u_t (u - u_{xx} )dx}  &
 = {1 \over 2}{d \over {dt}}\int_0^1 {u^2dx - u_t u_x
\left| {_0^1 } \right. + {1 \over 2}} {d \over {dt}}\int_0^1 {u_x^2
dx}\\ &
 ={1 \over 2}{d \over {dt}}(\int_0^1 {u^2dx
+ } \int_0^1 {u_x^2 dx} )
\end{split}
\end{equation}
For the RHS of (\ref{eq3.2}), using the  boundary conditions of the
closed-loop system (\ref{eq2.2}), we obtain
\begin{equation}
\label{eq3.4}
\begin{split}
 &- \int_0^1 {uu_x (u - u_{xx} )dx} - k\int_0^1 {u_x (u - u_{xx} )dx} -
\int_0^1 {\partial _x (G * (u^2 + {1 \over 2}u_x^2 ))(u - u_{xx}
)dx}
\\&
 = - \int_0^1 {u^2u_x dx} + \int_0^1 {uu_x u_{xx} dx} - k\int_0^1 {uu_x dx +
k} \int_0^1 {u_x u_{xx} dx}
\\&
 - G * (u^2 + {1 \over 2}u_x^2 )u\left| {_0^1 } \right.  +
\int_0^1 {G * (u^2 + {1 \over 2}u_x^2 )u_x dx}  \\& +\partial _x (G
* (u^2 + {1 \over 2}u_x^2 ))u_x \left| {_0^1 } \right. -
\int_0^1 {\partial _x^2 (G * (u^2 + {1 \over 2}u_x^2 ))} u_x dx
\\&
 = - {1 \over 3}u^3\left| {_0^1 } \right. + \int_0^1 {uu_x u_{xx}
dx} - {k \over 2}u^2\left| {_0^1 } \right. + {k \over 2}u_x^2 \left|
{_0^1 } \right. + \int_0^1 {(u^2 + {1 \over 2}u_x^2 )u_x dx}
\\&
 = \int_0^1 {uu_x u_{xx} dx} + (u^2 + {1 \over 2}u_x^2
)u\left| {_0^1 } \right. - \int_0^1 {(2uu_x + u_x u_{xx} )udx}
\\&
 = - \int_0^1 {2u^2u_x dx}  = -{2 \over
3}u^3\left| {_0^1 } \right. = 0
\end{split}
\end{equation}
It follows from (\ref{eq3.2})-(\ref{eq3.4}) that
\begin{equation}
\label{eq3.5}{1 \over 2}{d \over {dt}}\int_0^1 {(u^2 + u_x^2 )dx} =
0
\end{equation}
Integrating (\ref{eq3.5}) over $(0,t)$, we obtain (\ref{eq3.1}).
\end{proof}
\begin{remark}  Employing the Agmon's inequality
\[
\int_0^1{u^2(x)}dx\leq 2u^2(0)+4\int_0^1{u_x^2(x)} dx \] and the
Poincar\'{e} inequality \[ \max_{x\in[0,1]}u^2(x)\leq
u^2(0)+2\sqrt{\int_0^1{u^2(x)}dx}\sqrt{\int_0^1{u_x^2(x)}dx}\] with
$u(0)=0$, we have
\[ \max_{x\in[0,1]}{|u(x)|}\leq 2\sqrt{\int_0^1{u_x^2(x)}dx}\]
It then  follows from Theorem 3.1  that
\[ \sup_{(x,t)\in[0,1]\times[0,+\infty)}|u(x,t)| \leq 2\|u_0\|_1
\]
This shows that the solution $u(x,t)$ to the closed-loop system
(\ref{eq2.2}) or system (\ref{eq1.3}) is bounded if the initial data
$u_0 (x) \in H_{0,1}^2 $.
\end{remark}

 Now we present a blow-up result of solution to the closed-loop system (\ref{eq2.2}) (or
system (\ref{eq1.3})).
\begin{theorem}
Assume $u_0 (x) \in H_{0,1}^2 $ and $T$ is the maximal existence
time of the solution $u(x,t)$ to the closed-loop system
(\ref{eq2.2})(or (\ref{eq1.3})) guaranteed by Theorem 2.2. If there
exists one point $x_0\in (0,1)$ such that $u_{xx}(x_0,t)=0$ and
$u'_0(x_0)<-\sqrt{2}\|u_0\|_1$, then the corresponding solution
blows up in finite time. Moreover, the maximal time of existence is
estimated above by
\[
{1 \over \sqrt{2}\|u_0\|_1} \ln ({h(0)-\sqrt{2}\|u_0\|_1 \over h(0)
+\sqrt{2}\|u_0\|_1})
\]
where $h(0)=u'_0(x_0)$.
\end{theorem}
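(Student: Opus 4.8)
The plan is to reduce the PDE to a scalar Riccati differential inequality for $h(t) = u_x(x_0,t)$ at the fixed point $x_0$. First I would differentiate the first equation of (\ref{eq2.2}) in $x$ and use the identity $\partial_x^2(G * f) = G * f - f$, which follows from $(1 - \partial_x^2)(G * f) = f$, to eliminate the second derivative of the convolution. This yields
\[
u_{tx} + \tfrac{1}{2}u_x^2 + uu_{xx} + ku_{xx} - u^2 + G * (u^2 + \tfrac{1}{2}u_x^2) = 0.
\]
Evaluating at $x_0$, where by hypothesis $u_{xx}(x_0,t) = 0$, the terms $uu_{xx}$ and $ku_{xx}$ vanish, and since $x_0$ is fixed we have $h'(t) = u_{tx}(x_0,t)$. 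This gives the exact evolution
\[
h'(t) = -\tfrac{1}{2}h^2(t) + u^2(x_0,t) - [G * (u^2 + \tfrac{1}{2}u_x^2)](x_0,t).
\]

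Next I would control the two remaining terms using the conservation law of Theorem 3.1. The kernel $G$ is strictly positive on $[-1,1]$ and $u^2 + \tfrac{1}{2}u_x^2 \ge 0$, so the convolution term is non-negative and can be discarded from an upper bound. For $u^2(x_0,t)$ I would use the boundary condition $u(0,t)=0$ to write $u(x_0,t) = \int_0^{x_0} u_x\,dy$, apply the Cauchy--Schwarz inequality, and invoke $\|u(\cdot,t)\|_1^2 = \|u_0\|_1^2$ to get $u^2(x_0,t) \le \int_0^1 u_x^2\,dy \le \|u_0\|_1^2$. Together these produce the Riccati inequality
\[
h'(t) \le -\tfrac{1}{2}h^2(t) + \|u_0\|_1^2 = -\tfrac{1}{2}\bigl(h^2(t) - 2\|u_0\|_1^2\bigr).
\]

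Writing $\alpha = \sqrt{2}\|u_0\|_1$, the hypothesis $h(0) = u'_0(x_0) < -\alpha$ places $h$ in the region $h^2 > \alpha^2$, so $h' < 0$ and $h$ remains below $-\alpha$ for all $t$. I would then compare $h$ with the solution $\phi$ of the equality $\phi' = -\tfrac{1}{2}(\phi^2 - \alpha^2)$, $\phi(0) = h(0)$; integrating by partial fractions shows $\phi \to -\infty$ at precisely $T^* = \tfrac{1}{\alpha}\ln\tfrac{h(0)-\alpha}{h(0)+\alpha}$, which after substituting $\alpha = \sqrt{2}\|u_0\|_1$ is exactly the claimed upper bound. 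By the comparison principle $h(t) \le \phi(t)$, so $u_x(x_0,t) \to -\infty$ as $t$ increases to some $T_0 \le T^*$.

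Finally I would convert this pointwise blow-up of $u_x$ into blow-up of the $H^2$ norm, which is what the maximal-time statement of Theorem 2.2 demands. Using $u_x(0,t) = 0$, write $h(t) = \int_0^{x_0} u_{xx}\,dy$, so Cauchy--Schwarz gives $|h(t)| \le \|u_{xx}(\cdot,t)\|_0 \le \|u(\cdot,t)\|_2$; hence $|h(t)| \to \infty$ forces $\|u(\cdot,t)\|_2 \to \infty$, and the solution cannot be continued past $T_0 \le T^*$. I expect the main obstacle to be the bookkeeping in the first step---correctly deriving the $u_{tx}$ equation and confirming the definite sign of the convolution term---rather than the ODE comparison, which is routine once the Riccati inequality is established.
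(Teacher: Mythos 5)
Your proposal is correct and follows essentially the same route as the paper's proof: differentiate the equation in $x$, use $\partial_x^2(G*f)=G*f-f$, evaluate at $x_0$ where $u_{xx}$ vanishes, discard the non-negative convolution term, bound $u^2(x_0,t)\le\|u_0\|_1^2$ via the conservation law of Theorem 3.1, and close with the Riccati comparison giving the identical blow-up time. The only cosmetic differences are that you obtain the bounds $u^2(x_0,t)\le\|u_0\|_1^2$ and $|h(t)|\le\|u(\cdot,t)\|_2$ by Cauchy--Schwarz together with the boundary conditions, where the paper simply invokes the Sobolev embeddings $\|u\|_{L^\infty}\le\|u\|_1$ and $\|u_x\|_{L^\infty}\le\|u\|_2$.
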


\begin{proof}  Differentiating the first
equation of the closed-loop system (\ref{eq2.2})  with respect to
$x$, in view of $\partial_x^2G*f=G*f-f$, we have
\begin{equation}
\label{eq3.6} u_{tx}=-{1 \over 2}u_x^2-uu_{xx}+\gamma
u_{xx}+u^2-G*(u^2+{1 \over 2}u_x^2)
\end{equation}
Let $x=x_0$ in (\ref{eq3.6}) and set $h(t)=u_x(x_0,t)$. Noting that
$u_{xx}(x_0,t)=0$ and $G*(u^2+{1 \over 2}u_x^2)\geq 0$, we obtain

\begin{equation}
\label{eq3.7} h'(t)\leq -{1 \over 2}h^2(t)+u^2(x_0,t).
\end{equation}
In view of (\ref{eq3.1}) and Sobolev embedding theorem, we have
\begin{equation}
\label{eq3.8} u^2(x_0,t)\leq \|u\|_{L^\infty}^2\leq \|u\|_1^2\ =
\|u_0\|_1^2.
\end{equation}
It follows from (\ref{eq3.7}) and (\ref{eq3.8}) that
\begin{equation}
 h'(t)\leq -{1 \over 2}h^2(t)+\|u_0\|_1^2.\nonumber
\end{equation}
Note that if $h(0)\leq -\sqrt{2}\|u_0\|_1$, then $h(t)\leq
\sqrt{2}\|u_0\|_1$, for all $t \in [0,T )$. Therefore, from the
above inequality we obtain
\[
{h(0) +\sqrt{2}\|u_0\|_1 \over h(0)
-\sqrt{2}\|u_0\|_1}e^{\sqrt{2}\|u_0\|_1t}-1\leq {2\sqrt{2}\|u_0\|_1
\over h(t)-\sqrt{2}\|u_0\|_1}\leq 0.
\]
Due to $0<{h(0) +\sqrt{2}\|u_0\|_1 \over h(0)
-\sqrt{2}\|u_0\|_1}<1$, then exists
\[
T_0 \leq {1 \over \sqrt{2}\|u_0\|_1} \ln ({h(0)-\sqrt{2}\|u_0\|_1
\over h(0) +\sqrt{2}\|u_0\|_1})
\]
such that
\[
\lim_{t\rightarrow T_0} h(t)=-\infty.
\]
Thus $\lim_{t\rightarrow T_0} \|u\|_2=\infty$ because
$|u_x(x,t)|\leq\|u_x(x,t)\|_{L^\infty}\leq\|u\|_2$. That is, the
solution $u(x,t)$ to the closed-loop system (\ref{eq2.2}) does not
exist globally in time in function space $H_{0,1}^2$.
\end{proof}
\begin{remark}  Theorem 3.2 shows that although  $\int_0^1{u_x^2}dx$
is bounded (see Theorem 3.1) , it does not guarantee that $u_x(x,t)$
is bounded for all $x\in[0,1]$.
\end{remark}

\end{document}